\newtheorem{theorem}{Theorem}
\begin{document}

\date{}

\title{Privacy-Preserving Methods for Outlier-Resistant Average Consensus and Shallow Ranked Vote Leader Election}

\author{
\rm Luke Sperling\\
\rm Computer Science and Engineering,\\
\rm Michigan State University\\
\rm East Lansing, USA\\
\rm sperli14@msu.edu
\and
\rm Sandeep S Kulkarni\\
\rm Computer Science and Engineering,\\
\rm Michigan State University\\
\rm East Lansing, USA\\
\rm sandeep@msu.edu
} 



\newcommand{\publickey}{\ensuremath{key_p}\xspace}
\newcommand{\secretkey}{\ensuremath{key_s}\xspace}


\maketitle

\begin{abstract}
    Consensus and leader election are fundamental problems in distributed systems.
    Consensus is the problem in which all processes in a distributed computation must agree on some value. Average consensus is a popular form of consensus, where the agreed upon value is the average of the initial values of all the processes. 
    In a typical solution for consensus, each process learns the value of others' to determine the final decision. However, this is undesirable if processes want to keep their values secret from others. 
    
    With this motivation, we present a solution to privacy-preserving average consensus, where no process can learn the initial value of any other process. Additionally, we augment our approach to provide outlier resistance, where extreme values are not included in the average calculation. Privacy is fully preserved at every stage, including preventing any process from learning the identities of processes that hold outlier values. To our knowledge, this is the first privacy-preserving average consensus algorithm featuring outlier resistance.
    
    In the context of leader election, each process votes for the one that it wants to be the leader. The goal is to ensure that the leader is elected in such a way that each vote remains secret and the sum of votes remain secret during the election. Only the final vote tally is available to all processes. This ensures that processes that vote early are not able to influence the votes of other processes. We augment our approach with shallow ranked voting by allowing processes to not only vote for a single process, but to designate a secondary process to vote towards in the event that their primary vote's candidate does not win the election.
    
\end{abstract}

\section{Introduction \label{intro}}

This paper focuses on two fundamental problems in distributed computing, consensus and leader election, and presents algorithms that preserve privacy while solving these problems.

Consensus \cite{DBLP:conf/opodis/Lamport02} is a fundamental problem in distributed computation. In consensus, all processes participating in a computation must agree on a shared value. This has applications in many scenarios, such as clock synchronization, leader election, and cloud computing.

In this paper, we focus on the case where the initial input is real-valued. And, the goal is that all processes decide on the average of those inputs (possibly, excluding some outliers).

Leader election \cite{DBLP:journals/tc/Garcia-Molina82} is another fundamental problem in distributed computation. The goal is for each process to eventually decide on a single process it thinks/wants as the leader. In the end, all processes should agree on which process is the leader.

A method of performing leader election is via ballot-casting, where each process designates a single other process it wishes to elect. Whichever process receives the most votes is elected as the leader. There needs to be a method of tie-breaking in the case that two processes receive the plurality of votes.

One concern in consensus and leader election is privacy. Traditional consensus assumes that votes are public. There are many reasons for wanting to keep votes private. In practical applications, these initial states may represent sensitive information. Similarly, in the leader election where ballot-casting is used, the votes should be kept private as well.

Situations where accurate reporting is desired but the information being reported is sensitive are concrete applications for privacy-preserving consensus algorithms. For example, the United States government may wish to collect data from large tech companies regarding how many security breaches they've faced in the past year. To encourage accurate reporting of this information, privacy-preserving solutions may be employed where each entity would be guaranteed that their data will remain private from all other entities as well as the government and the government can only learn the average value.
Another application of privacy-preserving consensus is the multi-agent rendezvous problem, where multiple agents wish to agree on a location to meet but do not want to disclose their initial location \cite{lin2004multi}. 

There are two types of average consensus. In the first type,  the goal is for the participant processes to compute the average for themselves. And, the goal is to prevent participant processes from learning others' contributions. In the second type,  we have a trusted/collector process that is interested in computing the average. And, the average value should be known only to this trusted process. However, none (including the trusted process) should learn the original values of other processes. 

In this paper, we introduce new algorithms that solve privacy-preserving average consensus and privacy-preserving leader election. Our approach includes a solution to privacy-preserving outlier-resistant consensus, which is a previously open problem, as far as we can tell. We leverage the properties of Homomorphic Encryption (HE) that allow computations over encrypted data without access to the secret key needed for decryption \cite{gentry2009fully}. Our leader election algorithm allows shallow ranked voting, such that each process not only may vote for their most-wanted candidate but also may indicate a secondary candidate to vote for if their primary candidate is not elected.

The intuition of our approach is as follows: 
In the absence of a trusted process, each process encrypts its initial state with its own public key. These ciphertexts are passed around the other processes and contributed to before being returned to the keyholder for decryption.
In the presence of a trusted process, that process holds exclusive access to the secret key. The other processes use the public key to encrypt their initial values that are homomorphically pooled together via message-passing. After aggregating all of the votes, the average may be calculated without decrypting the value. No process, not even the trusted process, learns the initial value of any process.

\textbf{Contributions of the paper. }
Our contributions in this paper are as follows:
\begin{itemize}
    \item We present a novel algorithm solving privacy-preserving average consensus in the presence of a trusted third party where no process, not even the trusted process, is able to determine the initial value of any other process.
    \item We modify the above algorithm for solving privacy-preserving average consensus with no trusted third party. Here, the initial state of each process is kept secret from each other process.
    \item We present a novel algorithm solving outlier-resistant privacy-preserving average consensus, which is a previously unsolved problem to our knowledge. Initial values which are considered outliers (determined as several standard deviations away from the mean) are not included in the mean calculation. Additionally, the identity of the processes holding extreme values cannot be deduced by any processes.
    \item We identify a  novel algorithm for solving privacy-preserving leader election. In this algorithm, the vote of each process is kept private and no process is able to gain any information that indicates which processes are more likely to win the election until the results are determined. Additionally, we provide the option for processes to designate a secondary vote for shallow-ranked voting.
\end{itemize}

\textbf{Organization of the paper. } 
Section \ref{background} contains necessary background on homomorphic encryption. Section \ref{problem} introduces and defines the system specifications. Sections \ref{approach1} and \ref{approach2} detail our solutions to average consensus and outlier-resistant average consensus, respectively. Section \ref{leaderelection} describes our privacy-preserving leader election algorithm. Section \ref{related} discusses related work in the literature. Finally, section \ref{conclusion} provides concluding remarks.

\section{Background - Homomorphic Encryption \label{background}}
To ensure the privacy of the initial values of the processes, we employ the Cheon-Kim-Kim-Song (CKKS) encryption scheme \cite{cheon2017homomorphic}. The CKKS scheme allows for approximate computations over encrypted values without access to the secret key needed for decryption. Its hardness is based on the Ring Learning with Errors problem. 
Entire real-valued vectors are encoded as plaintexts which are of the form $R = \mathbb{Z}[x]/(x^N + 1)$ before being encrypted via public key as ciphertexts. 
Ciphertexts are pairs of polynomial rings in the form $R_q^2 = \mathbb{Z}_q[x]/(x^N + 1)$ where $R_q$ represents polynomials of degree less than $N$ and coefficients modulo $q$.
The following operations are supported:

\textbf{Key Generation}: Given a set of parameters (such as level of security), generates a public key and secret key.

\textbf{Encryption}: Encrypt a plaintext into a ciphertext using the public key.

\textbf{Decryption}: From a ciphertext and the secret key, recover the original plaintext.

\textbf{Addition}: Two ciphertexts, or a ciphertext and a plaintext, are added together to result in a new ciphertext. This corresponds to the element-wise addition of the underlying vectors. Formally, $add(Enc[x_1,x_2,...,x_n],Enc[y_1,y_2,...,y_n]) = Enc[x_1+y_1,x_2+y_2,...,x_n+y_n]$.

\textbf{Multiplication}: Homomorphic multiplication translates to element-wise multiplication of the underlying vectors. Relinearization is needed after homomorphic multiplication. Formally, $mult(Enc[x_1,x_2,...,x_n],Enc[y_1,y_2,...,y_n]) = Enc[x_1y_1,x_2y_2,...,x_ny_n]$

\textbf{Relinearization}: Reduces the number of polynomials in a ciphertext from three to two to prevent the size of the ciphertext from growing from repeated multiplications.

\textbf{Rotation}: Using an optionally-generated set of rotation keys, ciphertexts may be cyclically rotated. In particular, it is possible to compute $Enc(x_2, x_3, \cdots, x_n, x_1)$ from $Enc(x_1, x_2, x_3, \cdots, x_n)$ without decryption. This function takes a parameter that identifies the order of rotation (left or right) as well as the amount of rotation. 

We note that this encryption scheme does not suffer from small-domain attacks. Specifically, encrypting the same string (say $0$) can generate multiple possible outputs. Thus, one cannot attack it by generating all ciphertexts when the domain of votes is small. 




\section{System Specifications \label{problem}}
\textbf{System Model:} We consider an asynchronous distributed system where processes communicate with one another via message-passing. We do not assume that all processes can send a message to any other process, but rather can only send messages to their neighbors. This relationship can be described by a communication graph $G=\{\Pi,E\}$ where $\Pi$ denotes the processes (acting as nodes of the graph) and $E\subseteq \{\{p_i,p_j\} | p_i,p_j\in \Pi,i\neq j\}$ denotes edges of the graph that identify the neighbor relation.
All communication between processes is assumed to be encrypted (with standard encryption techniques) with the receiver's public key to prevent any possibility of eavesdropping and causing a privacy violation. Although all sensitive information is homomorphically encrypted, this is done to prevent the homomorphic keyholder from eavesdropping to learn the private information (which would otherwise be a possible attack).

\textbf{Adversary Capability:} We assume that any adversaries are Honest-But-Curious, meaning they follow the protocol exactly but save a copy of any value they observe and try to deduce any information possible. In this work, we do not consider byzantine processes,
but rather focus on the aspect of privacy preservation.

\section{Privacy-Preserving Consensus \label{approach1}}

Our approach leverages the qualities of HE to provide strong privacy guarantees while still arriving at average consensus among all processes. No process is able to learn the starting value of any other process. We provide two algorithms to handle two different scenarios: the situation where a trusted third party is present and the situation where no process is trusted by all other processes.

\subsection{Problem Statement}
Each process $p_i$ holds an initial value $v_i \in \mathbb{R}$. By the end of the computation each process should decide on a value satisfying the following  conditions, where validity and agreement are changed to reflect the need for deciding on a final value that is an average of all votes, and a requirement for privacy is added. 

\begin{itemize}
    \item Validity and Agreement: The decided value is the average of the initial values of all $n$ processes. Formally, the decided value is $\frac{1}{n}\sum_{i=1}^{n} v_i$.
\end{itemize}
\begin{itemize}
   \item Termination: Every correct process eventually terminates.
    
    \item Privacy: No process is able to learn the initial value of any other process.

\end{itemize}

\subsection{Privacy-Preserving Consensus - Trusted Third Party}


The trusted party creates two homomorphic keys: a public key, \publickey, used to encrypt data and a secret key, \secretkey, for use in decrypting data. All processes in the computation have access to \publickey, but only the trusted process knows \secretkey.

In this algorithm, initially, a process creates two vectors, \textit{Votes} and \textit{Counts}.
\textit{Votes} represents the global state of the system and will have a vote from each process in its indices.  \textit{Counts}, represents how many times each process has voted. 
The trusted process aims to learn the average of the initial values of the other processes, but should not be able to learn the initial values of any other process. The trusted process alone holds \secretkey, the secret key needed to decrypt the \textit{Votes} ciphertexts of the other processes, ensuring that no other processes may learn the initial states of other processes.


After setup, process $i$ sends \textit{Votes}, where $Votes_i = v_i$ and $Votes_j = 0, j \neq i$. It also sends \textit{Counts}, where $Counts_i = 1$ and $Counts_j = 0, j \neq i$, 
\textit{Votes} is encrypted with \publickey and \textit{Counts} is sent as plaintext. This ensures that each process knows how many times each process has voted but it does not know its vote. Each process waits until it receives a message. Each time a message is received it sums the message's \textit{Votes} with its local \textit{Votes} (which translates to element-wise addition of the underlying vectors) and the message's \textit{Counts} with its local \textit{Counts} \footnote{
It is possible that a message provides no new information to the process. If a message's Counts' set of nonzero indices is a subset of the process' local Counts, then the message is ignored.
}. Formally, the vector underlying $Votes$ becomes $[Votes_1+m.Votes_1, Votes_2+m.Votes_2, ... , Votes_n+m.Votes_n]$.
It then broadcasts these updated variables to its neighbors. This procedure continues until the local Counts of a process contains no nonzero elements.


When Algorithm \ref{alg:trusted} terminates, \textit{Votes} is of the form $Enc([d_1 v_1, d_2 v_2, ..., d_n v_n])$ for some integer vector d.
and \textit{Counts} is of the form $[d_1, d_2,...d_n]$, where $\forall j, d_j > 0$


As shown in Figure \ref{fig:overview}, a process's vote may have been added to \textit{Votes} multiple times due to the protocol of each process adding each neighbor's \textit{Votes} value to its own. \textit{Counts} keeps track of how many times this has happened per vote to undo multiple voting in the next step.


We need to compute element-wise division of \textit{Votes} by \textit{Counts} so that vote of each process is counted exactly once. 
Unfortunately, Element-wise division is not supported by CKKS. Hence, we first compute $\frac{1}{d_1n}, \frac{1}{d_2n}, ...$.  This is possible since $d_1, d_2, ...$ are available as plaintext in $Counts$. 
Then, we multiply $[\frac{1}{d_1n}, \frac{1}{d_2n}, ... ]$ with $Votes$ which is of the form $Enc([d_1 v_1, d_2 v_2, ..., d_n v_n])$. This results in $Enc([\frac{v_1}{n}, \frac{v_2}{n}, ..., \frac{v_n}{n}])$.


Next, we need to sum up all the elements of this modified \textit{Votes}. By rotating the ciphertext left by one, the underlying vector becomes $[\frac{v_2}{n},\frac{v_3}{n},...,\frac{v_n}{n},\frac{v_1}{n}]$. Doing this $n$ times and adding them all together produces the sum of all the elements. This requires $n$ rotations and additions. But this can be achieved with $log(n)-1$ rotations and additions using Algorithm 1 in \cite{boddeti2018secure}.

Upon performing this computation, the underlying vector of $Votes$ will contain the average in each index. Now, this message can be communicated to the trusted party. The trusted party can then learn the average by decrypting the message. While other processes cannot decrypt this message, they can broadcast it to others so that each process will have access to the encrypted average value for further homomorphic computation.


    \begin{algorithm}
    \caption{Privacy-preserving average consensus}\label{alg:trusted}
    \begin{algorithmic}
    \Function{InitConsensus}{ }
    \State $Votes \gets Enc([0,...,v_i,...,0])$
    \State $Counts \gets [0,...,1,...,0]$
        \For{$p_j \in N_i$}
            \State $Send((Votes,Counts),p_j)$
        \EndFor
    \EndFunction
    
    \Function{Receive}{message m}
    \State $Votes \gets Votes + m.Votes$
    \State $\forall j : Counts_j \gets  Counts_j+ m.Counts_j$
    \For{$p_j \in N_i$}
        \State $Send((Votes,Counts),p_j)$
    \EndFor
    \If{$0 \notin Counts$}
        \State decide $Prepare(Votes, Counts)$
    \EndIf
    \EndFunction
    \Function{Prepare}{$Votes, Counts$}
    \State $Counts \gets [\frac{Counts_1^{-1}}{n},\frac{Counts_2^{-1}}{n},...,\frac{Counts_n^{-1}}{n}]$
    \State $Votes \gets mult(Votes,Counts)$
    \For{$i=log(n)-1$ to $0$} 
        \State $Votes \gets add(Votes,rotate(Votes,2^i))$
    \EndFor
    \State return $Votes$
    \EndFunction
    \end{algorithmic}
    \end{algorithm}


\begin{figure}[h]
    \centering
    \includegraphics[width=0.2\textwidth]{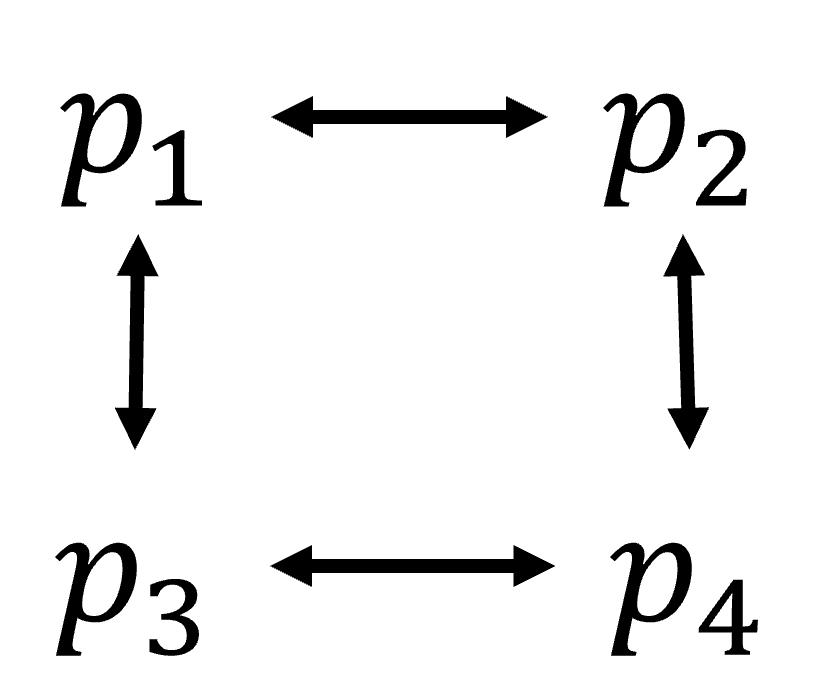}
    \includegraphics[width=0.45\textwidth]{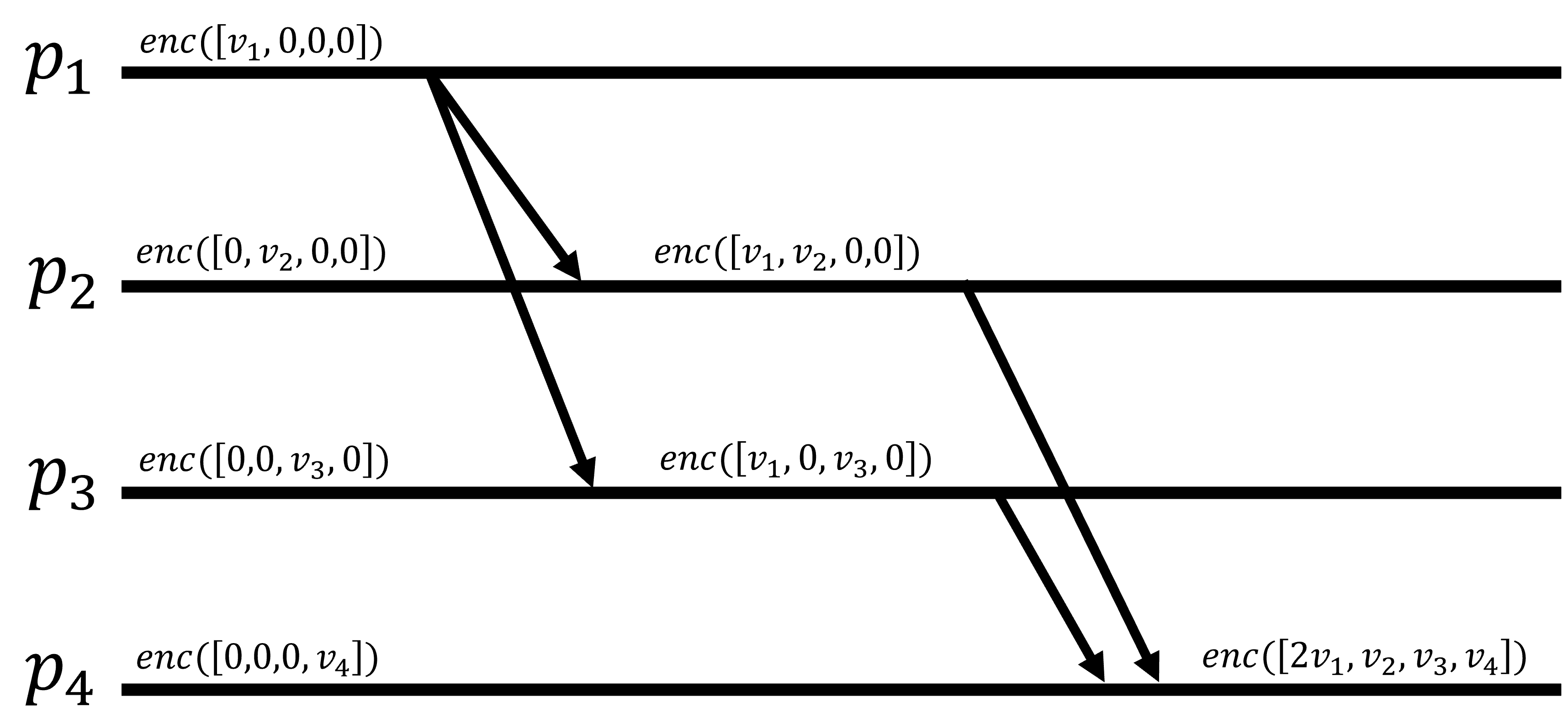}
    \caption{Time series chart of sample execution. Given the communication graph (above), a sample execution is shown (below). Processes 2 and 3 each send their local state to process 4, which can tell by the Counts portion of the message that the message contains new information. However, the vote of process 1 must be counted twice in order to incorporate all of this information. The Prepare phase resolves this issue.}
    \label{fig:overview}
\end{figure}

\subsection{Privacy-Preserving Consensus - No Trusted Parties}
In situations where no third party can be trusted, the previous algorithm can be modified to enable privacy-preserving average consensus without a trusted party. The main idea to achieve this is that (1) each process creates its own public/private homomorphic key, and (2) each process encrypts its own vote with \textit{its own} homomorphic public key and sends that to its neighbors. Algorithm \ref{alg:trusted} is then run with the initiator (keyholder) process being treated as the trusted process. This keyholder does not participate in the rest of the computation and is not sent any messages during the protocol until after the Prepare steps are taken. This is done to ensure no privacy breach can occur, as before the Prepare phase decrypting the ciphertext reveals the initial states of the other processes.

This protocol is done $n$ times concurrently, one for each process such that there are $n$ different public keys in use. At the end of the protocol, every process will learn the average but is unable to learn the initial state of any other process.

The above protocol will work correctly if the removal of the initiator does not partition the network. If the removal of the initiator partitions the network then no partition will have all the votes thereby preventing any process from calculating the average. However, if the original graph is connected, the protocol will succeed in computing the consensus value for some initiators. For example, if the underlying structure is a tree then the protocol will succeed when the leaves initiate the protocol. An initiator that succeeds can broadcast the average so others can learn it. 

The protocol can be easily revised so only select processes initiate the protocol. However, in this case, it would be necessary to select the relevant initiators and special care would be needed to deal with the case where the initiators fail. 


\subsection{Analysis }

In this section, we show that Algorithm \ref{alg:trusted} satisfies the desired properties of termination and privacy preservation. We also show correctness, i.e., we show that each process computes the average of the initial votes of processes. 
Finally, we discuss fault-tolerance aspects of Algorithm \ref{alg:trusted}. 

\subsubsection{Termination}
\begin{theorem} \label{theorem:term1}
If the communication graph G is connected and no process fails, all participating processes in Algorithm \ref{alg:trusted} terminate in finite time.
\end{theorem}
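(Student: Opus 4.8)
The plan is to argue termination by showing that the ``information content'' captured in each process's local \textit{Counts} vector is monotonically non-decreasing and strictly increases until every process has learned that all $n$ votes have been incorporated. First I would formalize the right progress measure. For each process $p_i$, let $S_i \subseteq \{1,\dots,n\}$ denote the set of indices $j$ for which $Counts_j > 0$ in $p_i$'s local state. The decision condition ``$0 \notin Counts$'' is exactly the condition $S_i = \{1,\dots,n\}$. By the footnote's rule, a received message is only acted upon (and hence only re-broadcast) when it enlarges $S_i$; so the key invariant is that $S_i$ never shrinks and that, whenever $p_i$ performs a broadcast triggered by a genuinely new message, $|S_i|$ has strictly increased since its previous broadcast.

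Next I would establish a propagation lemma: if at some point a process $p_i$ has $j \in S_i$, then every neighbor $p_k$ of $p_i$ eventually has $j \in S_k$. This follows because whenever $p_i$ incorporates a new index it broadcasts its updated state to all neighbors in $N_i$, and the message channels are reliable (no process fails, and the system, though asynchronous, delivers sent messages in finite time). Chaining this lemma along a path in $G$ and invoking connectivity of $G$, I would then argue by induction on path length that for every pair $i,j$ the index $j$ eventually reaches $S_i$; in particular $j=i$ is always in $S_i$ from initialization, so each initial vote originates somewhere and then spreads to the whole graph. Combining over all $j$, every process eventually reaches $S_i = \{1,\dots,n\}$ and fires its decision.

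The final piece is to confirm that the process actually \emph{terminates} rather than broadcasting forever. Here I would note that $S_i$ is a subset of a fixed finite set of size $n$, so it can strictly increase at most $n-1$ times; since a process only re-broadcasts when a message strictly enlarges $S_i$ (messages that add nothing are ignored per the footnote), each process sends only finitely many messages. Thus the total number of messages in the system is finite, the computation quiesces, and every process has executed its decision by the time quiescence is reached.

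The main obstacle is the asynchrony combined with the ``ignore redundant messages'' rule: I must be careful that the propagation lemma still holds even though a process suppresses re-broadcasts of messages whose \textit{Counts} support is already contained in its own. The subtle point is that suppression happens only when a message carries \emph{no} new index, so it never blocks the forward spread of a genuinely new index; I would make this precise by arguing that the first time any index $j$ is about to be lost to suppression at $p_i$, that index must already be present in $S_i$, so no information is dropped. Formalizing this ``no new index is ever permanently suppressed'' claim, and ruling out a bad interleaving where a message arrives out of order, is the part that requires the most care.
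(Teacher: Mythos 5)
Your proposal is correct, and its underlying intuition is the same as the paper's: initial values flood outward along edges of the connected graph until every process has seen all $n$ of them. But the two arguments are structured quite differently, and yours proves strictly more. The paper's proof is a four-sentence sketch that counts diameter$(G)$ hops and asserts termination; it never mentions the footnote's rule for ignoring redundant messages, and it never addresses whether processes eventually stop sending messages at all. Your decomposition --- a monotone progress measure $S_i$ (the support of the local \textit{Counts} vector), a neighbor-to-neighbor propagation lemma chained by induction along paths, and a separate finiteness argument showing each $S_i$ can strictly grow at most $n-1$ times --- fills exactly those gaps. In particular, your observation that a message can only be suppressed at $p_k$ when its support is already contained in $S_k$ (so suppression never loses an index) is the key fact the paper leaves entirely implicit, and your quiescence argument establishes something the paper's proof does not attempt: that the flooding itself involves only finitely many messages, rather than merely that decisions occur within a certain number of hops. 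What the paper's framing buys in exchange is a quantitative claim --- decisions happen within diameter$(G)$ hops --- which your asynchrony-friendly argument does not make (and which is in any case delicate to state precisely in an asynchronous model, where ``hops'' is not a unit of time).
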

\begin{proof}
Processes following Algorithm \ref{alg:trusted} terminate when all other processes have shared their initial states with it. Information passes between neighbors each time a process sends messages. For information to travel to all processes in the graph, diameter(G) hops must occur. Therefore, after diameter(G) hops, the algorithm terminates.
\end{proof}

\subsubsection{Correctness}

\begin{theorem}
If the communication graph G is connected, all participating processes in Algorithm \ref{alg:trusted} decide on $\frac{1}{n} \sum_{i=1}^{n} v_i$. 
\end{theorem}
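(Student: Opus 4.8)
The plan is to establish correctness by identifying an invariant that ties the \textit{Votes} and \textit{Counts} vectors together throughout the entire execution, and then tracing that invariant through the Prepare phase. The central claim is that for every process and for every message in transit, the underlying plaintext vectors satisfy $Votes_j = Counts_j \cdot v_j$ for every index $j$ (so that $Votes_j = 0$ exactly when $Counts_j = 0$, i.e.\ when no contribution from $p_j$ has yet arrived). Once this invariant is in hand, the stated post-termination forms $Enc([d_1 v_1, \dots, d_n v_n])$ and $[d_1, \dots, d_n]$ follow immediately, and the final division-by-count and summation collapse to the average.

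First I would prove the invariant by induction on the sequence of send/receive events. In the base case, \textsc{InitConsensus} sets $Votes_i = v_i$ and $Counts_i = 1$ at the owning index and $0$ elsewhere, which trivially satisfies the relation. For the inductive step, the only operation that mutates state is the element-wise addition in \textsc{Receive}: assuming $Votes_j = Counts_j \cdot v_j$ holds for both the local state and the incoming message $m$, the simultaneous updates $Votes_j \gets Votes_j + m.Votes_j$ and $Counts_j \gets Counts_j + m.Counts_j$ give $(Counts_j + m.Counts_j)v_j = Counts_j v_j + m.Counts_j v_j = Votes_j + m.Votes_j$, so the relation is preserved, and the state that is then broadcast inherits it. Crucially, this argument is indifferent to the asynchronous, possibly redundant accumulation described in the text: even when $p_j$'s value is counted several times so that $Counts_j = d_j > 1$, the numerator scales identically, keeping the per-index ratio equal to $v_j$.

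Next I would combine this invariant with Theorem \ref{theorem:term1}. A process decides only when $0 \notin Counts$, i.e.\ when every index has received at least one contribution, which connectivity guarantees eventually holds. At that moment $Counts_j = d_j \geq 1$ for all $j$, and the invariant yields $Votes_j = d_j v_j$. I would then trace Prepare step by step: replacing each $Counts_j$ by $\frac{1}{d_j n}$ and applying $mult$ gives $Votes_j = d_j v_j \cdot \frac{1}{d_j n} = \frac{v_j}{n}$, where the $d_j$ cancel precisely because of the invariant. The rotate-and-add loop, which implements the logarithmic summation of Algorithm 1 in \cite{boddeti2018secure}, then deposits $\sum_{j=1}^n \frac{v_j}{n} = \frac{1}{n}\sum_{j=1}^n v_j$ into every slot, so every deciding process obtains the claimed value upon decryption or further homomorphic use.

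I expect the main obstacle to be rigorously justifying that the invariant survives the uncontrolled asynchronous double-counting, rather than the Prepare computation, which is a deterministic algebraic simplification. In particular I would confirm that the footnote's rule for discarding messages whose \textit{Counts} support is already subsumed cannot break the invariant (discarding is a no-op on the state) and, more importantly, that no index of \textit{Counts} is ever incremented without the matching scaled contribution being added to \textit{Votes} in the \emph{same} atomic step of \textsc{Receive}; this simultaneity is exactly what makes the ratio robust to redundant accumulation. A secondary caveat I would flag is that CKKS operations are only approximate, so the argument above treats them as exact and the equality holds up to the inherent approximation error of the scheme.
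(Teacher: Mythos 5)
Your proposal is correct and follows essentially the same route as the paper's proof: your invariant $Votes_j = Counts_j \cdot v_j$ is exactly the paper's observation that $Votes$ and $Counts$ are built as the same linear combination of the processes' initial vectors (hence share the coefficient vector $d$), after which multiplication by $[\frac{1}{d_1 n},\dots,\frac{1}{d_n n}]$ and the rotate-and-add summation yield $\frac{1}{n}\sum_{i=1}^{n} v_i$. The only difference is that you prove the shared-coefficient fact explicitly by induction on send/receive events and flag the message-discard rule and CKKS approximation, details the paper leaves implicit.
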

\begin{proof}
The value of the vector encrypted as $Votes$ is a linear combination of the initial Votes variables from each process. As such, the vector underlying $Votes$ can be expressed as $[d_1 v_1, d_2 v_2, ..., d_n v_n]$ for some integer vector d. Because $Counts$ is calculated the same way as Votes but with each process' initial $Counts$ vector, $Counts$ can be expressed as $[d_1, d_2, ..., d_n]$ for the same integer vector d. Performing element-wise division of $Votes$ by $Counts$ yields $[v_1, v_2, ..., v_n]$. Dividing each element by n and summing all elements yields the result.
\end{proof}

\subsubsection{Complexity}
A single message must make diameter(G) hops before Algorithm \ref{alg:trusted} terminates. This represents information from all processes transferring to all other processes. The communication complexity is therefore $O(diameter(G)|N_i|p)$ per process i, with p denoting the size in memory of a single ciphertext. Similarly, each process performs $O(diameter(G)|N_i|)$ homomorphic additions. The Prepare phase performs $O(n)$ inversions, $O(1)$ ciphertext-plaintext multiplications, and $O(log(n))$ ciphertext rotations and ciphertext additions. 
The version of the algorithm with no trusted parties is similar in both communication and computational complexity, although messages must return to the keyholder once Prepared. This results in a communication complexity of $O(2diameter(G)|N_i|p) = O(diameter(G)|N_i|p)$.

\subsubsection{Privacy}
\begin{theorem}
If all processes follow Algorithm \ref{alg:trusted}, no process is able to learn the initial value of any other process.
\end{theorem}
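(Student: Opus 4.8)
The plan is to split the argument by the type of observer, since the two kinds of processes have fundamentally different capabilities: a non-trusted process never holds \secretkey, whereas the trusted process does. I would first enumerate exactly what each observer can see during an execution. A non-trusted process observes, for each received message, a \textit{Votes} ciphertext (encrypted under \publickey) together with a plaintext \textit{Counts} vector, plus the single Prepared ciphertext at the end; crucially, it never holds \secretkey. The trusted process, by the protocol, is delivered only the final Prepared ciphertext and recovers exactly one plaintext from it. Establishing these two ``views'' precisely is what the rest of the proof will reason about.

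For the non-trusted processes, I would reduce privacy of the individual $v_i$ to the semantic security (IND-CPA) of CKKS, whose hardness rests on Ring-LWE as recalled in Section \ref{background}. Since such a process lacks \secretkey, every \textit{Votes} ciphertext it sees is computationally indistinguishable from an encryption of any other plaintext, so no information about any $v_j$ is extractable; the remark in Section \ref{background} that CKKS randomizes encryptions of identical messages rules out the small-domain, guess-and-re-encrypt attack. It then remains to argue that the plaintext \textit{Counts} leak nothing: each $Counts_j = d_j$ merely counts the multiplicity with which process $j$'s contribution has been accumulated, a quantity determined solely by the communication graph and the message schedule and therefore independent of the secret values $v_j$.

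For the trusted process, the key point is that the only plaintext it ever recovers is the decryption of the Prepared \textit{Votes}, which by the correctness theorem equals $\frac{1}{n}\sum_{i=1}^{n} v_i$ in every slot. For $n \ge 2$ this single aggregate is consistent with infinitely many value-vectors, so it does not determine any individual $v_i$. Here I would lean on the model assumption that all inter-process messages are additionally encrypted under the \emph{receiver's} public key: this is precisely what prevents the keyholder from eavesdropping on the intermediate \textit{Votes} ciphertexts, which it could otherwise decrypt to obtain partial sums such as $d_1 v_1$. Combining the two cases yields that no process learns any other process's initial value.

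The main obstacle is the trusted process, because it is the one party with decryption power. The delicate step is showing that its view is confined to the final aggregate: one must argue that the only ciphertext it is ever given is the Prepared one, and that transport encryption under receivers' keys blocks it from intercepting and decrypting intermediate states. Once that containment is established, the residual claim, that knowledge of the mean alone does not reveal a summand, is immediate for $n \ge 2$.
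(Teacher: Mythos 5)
Your proposal is correct and follows essentially the same approach as the paper's proof: a case split between the non-trusted processes (who lack \secretkey and therefore cannot decrypt any \textit{Votes} ciphertext) and the trusted keyholder (who only ever receives the post-Prepare ciphertext, with the system model's transport encryption blocking it from eavesdropping on intermediate states). Your version is more careful than the paper's terse argument---you explicitly invoke IND-CPA security of CKKS, argue that the plaintext \textit{Counts} vector leaks nothing about the $v_j$, and note that the decrypted average alone cannot determine any individual $v_i$ when $n \ge 2$---but the decomposition and the key facts relied upon are the same.
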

\begin{proof}
The only way to learn the initial value of another process in this algorithm is to obtain decryption of the Votes ciphertext before being put through the Prepare steps of rotation, addition, etc. However, no process other than the trusted process has access to the secret key needed for decryption. Furthermore, the trusted process has no access to the Votes ciphertext before the prepare phase, so not even the trusted process may violate the privacy of the other processes.
\end{proof}



\subsubsection{Fault Tolerance}
\begin{theorem}
Algorithm \ref{alg:trusted}  terminates under any number of detectable process faults, as long as the graph remains connected from the removal of any subset of faulty processes. 

\end{theorem}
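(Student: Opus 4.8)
The plan is to reduce the faulty execution to the fault-free analysis already established in Theorem \ref{theorem:term1}. Let $F \subseteq \Pi$ denote the set of faulty processes and let $G'$ be the subgraph of $G$ induced by the surviving processes $\Pi \setminus F$. By hypothesis $G'$ is connected, so it has a well-defined finite $\mathrm{diameter}(G')$. The goal is to show that every surviving process reaches a termination condition after at most $\mathrm{diameter}(G')$ rounds of message propagation among the survivors. Note that this is a claim about termination only, not about the value decided, so I do not need to track how the average is affected by the missing votes.

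First I would address the termination condition itself, which is the crux of the argument. As written, a process decides only when $0 \notin Counts$, i.e., when it has received a contribution from every one of the $n$ processes. If a process in $F$ never votes, its index in $Counts$ remains zero forever, so the literal guard is never met. This is exactly where \emph{detectability} enters: when a fault at $p_j$ is detected, each surviving process marks index $j$ as resolved and stops waiting for $Counts_j$ to become nonzero. Concretely, I would replace the guard $0 \notin Counts$ with the guard that for every index $j$, either $Counts_j > 0$ or $p_j \in F$ has been detected as faulty.

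Next I would argue propagation on the survivor subgraph. Restricted to $\Pi \setminus F$, every surviving process still forwards its updated $Votes$ and $Counts$ to its surviving neighbors, and the fault-detection signals likewise propagate along the edges of $G'$. Since $G'$ is connected, the hop-counting argument from the proof of Theorem \ref{theorem:term1} applies verbatim: after $\mathrm{diameter}(G')$ hops, every surviving process has received the vote of every other surviving process together with a detection notice for each faulty process. At that point the revised guard holds at every surviving process, so each one decides and terminates in finite time.

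I expect the main obstacle to be making precise the interface between fault detection and the termination guard — in particular, guaranteeing that a detection notice for each $p_j \in F$ reaches every survivor (which again relies on connectivity of $G'$), and that no surviving process can block indefinitely on an index that will never be filled. The two hypotheses of the theorem are precisely what close this gap: detectability supplies the notices that unblock the faulty indices, and connectivity of $G'$ under removal of any subset of faulty processes ensures those notices and all surviving votes actually propagate. Once both are in hand, termination follows from the finiteness of $\mathrm{diameter}(G')$.
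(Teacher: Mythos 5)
Your proof is correct in substance but takes a genuinely different route from the paper's. The paper's proof does not use detectability at all: it argues that once a faulty process delivers its information (its vote) to a correct process that is connected to every other correct process before faulting, the original guard $0 \notin Counts$ can still be satisfied everywhere, and termination follows from Theorem \ref{theorem:term1} unchanged. In effect the paper proves a weaker statement than the theorem claims, since it silently adds the condition that every faulty process manages to hand off its vote before dying; a process that faults before ever sending anything is not covered. Your argument closes exactly that gap: you invoke the \emph{detectable} hypothesis to replace the guard with ``$Counts_j > 0$ or $p_j$ detected faulty,'' so survivors never block on an index that will never be filled, and you then rerun the hop-counting argument on the survivor subgraph $G'$, whose connectivity is the theorem's other hypothesis. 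The cost of your approach is that it proves termination of a (slightly) modified algorithm rather than Algorithm \ref{alg:trusted} as written --- though this is precisely the modification the paper itself adopts in the fault-tolerance discussion of Section \ref{approach2}, where the termination condition is relaxed to require nonzero $Counts$ entries only at indices of correct processes. So your proof is more faithful to the theorem statement (any number of detectable faults, at any time) at the price of an explicit guard change, while the paper's proof keeps the algorithm intact at the price of an unstated timing assumption on when faults occur.
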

\begin{proof}
Once the information from a faulty process reaches a correct process that is connected to every other correct process, the algorithm will terminate due to Theorem 1. This is true for any number of faulty processes as long as the condition holds that they deliver a message to a correct process connected to every other correct process before faulting. 
\end{proof}

    
    
    

\section{Outlier Resistant Privacy-Preserving Consensus\label{approach2}}
In this section we modify the previous algorithms to allow for processes whose initial value is considered an outlier to be excluded from the average calculation. Our algorithm neither reveals the initial value of any process nor does it reveal the identities of the processes that hold the outlier values.

\subsection{Modified Problem Statement}
Outlier-resistant average consensus as a problem is very similar to average consensus with the following two conditions added:
\begin{itemize}
    \item Outlier Resistance: Initial values that are deemed by some criteria to be outliers are not included in the calculation of the mean.
    \item Outlier Privacy: No process learns which other processes' initial values are outliers.
\end{itemize}

Outliers tend to be values that deviate substantially from the mean.
We define outliers to be any values that lie outside $\mu \pm c \sigma$, where $c$ is the parameter to the algorithm, $\mu$ is the mean and $\sigma$ is the standard deviation. 


\subsection{Outlier Resistant Algorithm}
The basic idea of the algorithm is to perform the Algorithm \ref{alg:trusted} three times: once to calculate the mean, once to calculate the standard deviation, and once to calculate the mean without outliers. These must be done in this order to protect privacy. The standard deviation is needed to determine if a process' initial value is an outlier. The mean is needed to determine the standard deviation.  Mean and standard deviation cannot be calculated together using this method, leading to the necessity of three rounds.

The first two rounds, computing the mean and standard deviation, use the same approach used in Algorithm \ref{alg:trusted}. In the first phase, a process uses $v_i$ as its initial value in Algorithm \ref{alg:trusted}. In the second phase, it uses $(v_i-mean)^2$ as its initial value to compute the standard deviation. 

There are two ways to do this. One way is for the trusted process to decrypt the mean value from round 1 and send it to all processes. In this case, the second round will be identical to the first round except for the initial value is different. Another way does not require the participation of the trusted process between the first and second round. Specifically, the first round computes $Enc(mean, mean, \cdots)$. We can multiply this itself to compute $Enc(mean^2, mean^2, \cdots)$. Each process can also multiply this with $[0,0,0,2v_i, 0,0]$, i.e., a vector where $i^{th}$ entry is $2v_i$ and other entries are zero. Finally, processes can user Algorithm \ref{alg:trusted} to compute the mean of $v_i^2$. Linear combination of these three entries will allow us to compute the average of $(v_i - mean)^2$, which is the same as the sum of averages of $v_i^2$, $2v_i mean$, and $mean^2$. Thus, the standard deviation can also be computed without involving the trusted third party. The third phase, however, requires participation from the third party to identify the bounds necessary to determine which processes are outliers. We describe the third phase, next. 


In the third round, conceptually, we run two concurrent instances of algorithm \ref{alg:trusted}. The first instance uses \textit{Votes} and \textit{Counts}, where a process votes $v_i$ if it is not an outlier and votes $0$ if it is an outlier. The second instance uses \textit{Participating} and \textit{Counts}. (The \textit{Counts} value is shared and needs to be included only once).
For the first instance, the Algorithm \ref{alg:outlier_resistant} will compute $\frac{\Sigma_{i \not \in Outliers} v_i}{n}$, where $n$ is the number of processes. The second instance will compute $\frac{\Sigma_{i \not \in Outliers}\  1}{n}$. The ratio of these numbers will provide the average without outliers. 
Here, process $i$ uses the initial value of $1$ if it is not an outlier and $0$ if it is an outlier for its local value of \textit{Participating}. 
\textit{Counts} is a plaintext whereas \textit{Votes} and \textit{Participating} are encrypted.


\subsection{Fault Tolerance}
Detectable fault tolerance can be added by adding the rule that in between each of the three rounds, all processes adjust the value of $n$ to be the current number of correct processes. Additionally, instead of terminating each round when \textit{Counts} has no nonzero elements, it terminates when \textit{Counts} has a nonzero element in every index corresponding to a correct process.

\subsection{Analysis}

In this section, we discuss the properties of Algorithm \ref{alg:outlier_resistant}. Namely, we prove theorems related to termination, correctness, and privacy. We also analyze the algorithm for communication and computational complexity.

\subsubsection{Termination}
\begin{theorem}
If the communication graph G is connected and no process faults, all participating processes in Algorithm \ref{alg:outlier_resistant} terminate in finite time.
\end{theorem}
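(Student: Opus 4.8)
The plan is to reduce this statement to the termination result already established for Algorithm \ref{alg:trusted} in Theorem \ref{theorem:term1}. The key structural observation is that Algorithm \ref{alg:outlier_resistant} is a sequential composition of three rounds, each of which is an invocation of the same message-passing procedure used in Algorithm \ref{alg:trusted}: the first round computes the mean, the second computes the standard deviation, and the third computes the outlier-resistant mean (running two concurrent instances that share a single \textit{Counts} vector). Since finite termination is preserved under finite sequential composition, it suffices to show that each round terminates in finite time and that the handoff from one round to the next completes in finite time.

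First I would invoke Theorem \ref{theorem:term1} directly on each round. Because $G$ is connected and no process fails, the identical propagation argument applies: information from every process reaches every other process within $\mathrm{diameter}(G)$ hops, at which point each process's local \textit{Counts} vector has a nonzero entry in every index. This is precisely the per-round termination condition, so each of the three rounds terminates after at most $\mathrm{diameter}(G)$ hops. Next I would account for the transitions between rounds, where each process updates its initial value (using $v_i$, then $(v_i-\mathit{mean})^2$, then either $v_i$ or $0$ according to outlier status). In the trusted-party variant this involves a single decryption-and-broadcast step; in the trusted-party-free variant described in the text it involves a bounded number of homomorphic multiplications together with additional invocations of Algorithm \ref{alg:trusted}. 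In either case the inter-round work is a fixed number of local operations plus at most a constant number of extra consensus rounds, each finite by Theorem \ref{theorem:term1}.

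Finally I would conclude that the total running time is the sum of three finite per-round bounds plus a bounded amount of inter-round work, and is therefore finite. The only point requiring care is the third round, where two instances run concurrently but share \textit{Counts}: here I would note that because the shared \textit{Counts} governs termination and both instances advance along the same message schedule, the round still halts exactly when \textit{Counts} becomes fully nonzero, introducing no additional delay. I expect the main obstacle to be stating precisely what the trusted-party-free inter-round computation costs, but since that computation is itself expressed in terms of finitely many calls to Algorithm \ref{alg:trusted} and finitely many local homomorphic operations, finiteness follows at once.
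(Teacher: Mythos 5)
Your proposal matches the paper's own proof in its essential structure: both decompose Algorithm \ref{alg:outlier_resistant} into three sequential invocations of Algorithm \ref{alg:trusted}, apply Theorem \ref{theorem:term1} to each round, and observe that the modified third round retains the same termination condition on \textit{Counts}. Your additional accounting for the inter-round handoff (value updates and the trusted-party-free standard-deviation computation) is a welcome bit of extra rigor that the paper glosses over, but it does not change the argument's route.
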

\begin{proof}
Algorithm \ref{alg:outlier_resistant} runs Algorithm \ref{alg:trusted} three times. Theorem \ref{theorem:term1} proves that all three iterations terminate. Iteration 3 is modified but has the same termination condition.
\end{proof}

\subsubsection{Correctness}

\begin{theorem}
If the communication graph G is connected, all participating processes in Algorithm \ref{alg:outlier_resistant} decide on $\frac{\Sigma_{i \not \in Outliers} v_i}{\Sigma_{i \not \in Outliers}\  1}$. 
\end{theorem}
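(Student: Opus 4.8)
The plan is to reduce this correctness claim to two invocations of the correctness theorem already established for Algorithm~\ref{alg:trusted}, together with an argument that the outlier classification is well defined and identical across all processes. First I would invoke that theorem on the first two rounds: round~1 runs Algorithm~\ref{alg:trusted} with each process contributing $v_i$, so every process obtains $mean = \frac{1}{n}\sum_{i=1}^n v_i$; round~2 runs it with contribution $(v_i - mean)^2$, yielding the variance and hence $\sigma$. Since each process knows its own $v_i$ in the clear and now possesses the global $mean$ and $\sigma$, each process can locally and deterministically decide whether $v_i$ lies outside $mean \pm c\sigma$. The key point is that this predicate depends only on the globally agreed values $mean$, $\sigma$, $c$ and on $v_i$, so the set $Outliers$ is unambiguous and every process agrees on its own membership.

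Next I would analyze round~3 as two concurrent runs of Algorithm~\ref{alg:trusted} sharing a single $Counts$ vector. For the first run, process $i$ contributes $w_i := v_i$ if $i \notin Outliers$ and $w_i := 0$ otherwise; for the second run it contributes $u_i := 1$ if $i \notin Outliers$ and $u_i := 0$ otherwise. Crucially, an outlier still participates fully in the message-passing protocol and still contributes $1$ to $Counts$, so $Counts$ remains of the form $[d_1,\dots,d_n]$ with every $d_j > 0$ regardless of outlier status. Applying the correctness theorem for Algorithm~\ref{alg:trusted} to each run therefore yields $\frac{1}{n}\sum_{i=1}^n w_i = \frac{1}{n}\sum_{i \notin Outliers} v_i$ and $\frac{1}{n}\sum_{i=1}^n u_i = \frac{1}{n}\sum_{i \notin Outliers} 1$, since the zero contributions of outliers drop out of both sums. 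Taking the ratio of these two decided values cancels the common factor $\frac{1}{n}$ and produces $\frac{\sum_{i \notin Outliers} v_i}{\sum_{i \notin Outliers} 1}$, as claimed.

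The main obstacle I anticipate is the step that guarantees the $Counts$-based division of the Prepare phase still behaves correctly when some contributions are zero. One must verify that zeroing the $Votes$/$Participating$ entry of an outlier does not zero the corresponding $Counts$ entry: because $Counts$ tracks only how many times a process's packet was folded into the aggregate, independent of the encrypted payload, the divisor $d_j$ is strictly positive for every index, so the element-wise multiplication by $\frac{Counts_j^{-1}}{n}$ is well defined and maps $d_j \cdot 0$ to $0$ and $d_j v_j$ to $\frac{v_j}{n}$ exactly as in Algorithm~\ref{alg:trusted}. A secondary point worth stating explicitly is that CKKS computes these quantities only approximately, so the equalities above should be read as holding up to the scheme's approximation error, which does not affect the symbolic identity being proved. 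With these observations in place, the result follows directly from the already-proved correctness of Algorithm~\ref{alg:trusted}.
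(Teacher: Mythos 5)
Your proof is correct and takes essentially the same approach as the paper: round 3 is treated as two concurrent instances of Algorithm~\ref{alg:trusted} that compute $\frac{1}{n}\sum_{i \notin Outliers} v_i$ (from \textit{Votes}) and $\frac{1}{n}\sum_{i \notin Outliers} 1$ (from \textit{Participating}), and the ratio of the two decided values gives the claim. Your write-up is simply more explicit than the paper's, which asserts these two quantities directly without spelling out the reduction to the correctness theorem for Algorithm~\ref{alg:trusted}, the global agreement on the outlier set after rounds 1 and 2, or the fact that outliers still contribute to $Counts$ so every divisor $d_j$ stays positive.
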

\begin{proof}
In Algorithm \ref{alg:outlier_resistant}, two values are computed concurrently in iteration 3: $\frac{\Sigma_{i \not \in Outliers} v_i}{n}$ from \textit{Votes}, and $\frac{\Sigma_{i \not \in Outliers}\  1}{n}$ from \textit{Participating}. The ratio of these numbers will provide the average without outliers. 
\end{proof}

\subsubsection{Complexity}
Because Algorithm \ref{alg:outlier_resistant} repeats the previous algorithm three times, its communication and computational complexity remain the same as well.

\subsubsection{Privacy}
\begin{theorem}
If all processes follow Algorithm \ref{alg:outlier_resistant}, no process can learn which processes hold initial values that are considered outliers.
\end{theorem}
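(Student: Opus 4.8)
The plan is to reduce this statement to the privacy guarantee already established for Algorithm~\ref{alg:trusted} and then argue separately that the only additional quantities exposed by the outlier-resistant variant---the plaintext \textit{Counts} vector and the aggregated final outputs---are independent of any individual process's outlier status. First I would observe that a process's outlier status enters the computation in exactly one place: the value it contributes to the encrypted \textit{Votes} and \textit{Participating} vectors. A non-outlier $p_i$ contributes $(v_i, 1)$ while an outlier contributes $(0,0)$; in every other respect (message passing, \textit{Counts} updates, and the termination condition) the two behave identically. Crucially, each process determines its own status locally from the publicly available $\mu$ and $\sigma$ computed in rounds~1 and~2, so the status is never transmitted except as this single encrypted contribution.

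Next I would invoke the privacy theorem for Algorithm~\ref{alg:trusted}: because \textit{Votes} and \textit{Participating} are encrypted under \publickey and processed by the identical protocol, no process other than the trusted one can decrypt them, and the trusted process receives them only after the \textit{Prepare} phase. The key step is then to show that \textit{Prepare} destroys all per-index information. The rotate-and-add stage collapses the underlying vector so that every slot holds the \emph{sum} $\frac{1}{n}\sum_{i \not\in Outliers} v_i$ (respectively $\frac{1}{n}\sum_{i \not\in Outliers} 1$); the individual coordinates that would reveal which slots contributed $0$ are irrecoverably summed away. Hence even the trusted process, holding \secretkey, learns only these two aggregates, from which the \emph{number} of outliers is derivable but not their identities.

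Finally I would dispose of the plaintext channel. Since $Counts_i$ is incremented identically whether or not $p_i$ is an outlier, the resulting vector $[d_1,\dots,d_n]$ is a function solely of the communication graph and the message schedule, not of the initial values; observing it therefore yields no information about outlier status. Combining these facts, every quantity visible to any honest-but-curious process is either encrypted under a key it does not hold, a value-independent artifact of the topology, or an aggregate statistic that does not single out any index.

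The main obstacle I anticipate is making the aggregation argument fully rigorous. One must argue that the entire observable transcript available to any process (including the trusted one) is a deterministic function of $\mu$, $\sigma$, the topology-determined \textit{Counts}, and the two summed outputs, and that these aggregates genuinely underdetermine the partition of indices into outliers and non-outliers whenever more than one process could be an outlier. Care is also needed to confirm that cross-round comparisons---e.g.\ relating the plain mean from round~1, the outlier-free mean from round~3, and the revealed outlier count---cannot be combined to isolate a particular identity, which again reduces to the fact that no process can recover any other process's individual $v_i$ by the privacy of Algorithm~\ref{alg:trusted}.
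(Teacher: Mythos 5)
Your proposal is correct and follows essentially the same route as the paper's own proof: the only channel carrying outlier status is the encrypted \textit{Votes}/\textit{Participating} pair, any process holding those ciphertexts lacks \secretkey, and the keyholder receives them only after the Prepare phase has collapsed all per-index information into aggregates. Your version is in fact more complete than the paper's two-sentence argument, since you additionally verify that the plaintext \textit{Counts} channel is independent of outlier status and that the revealed aggregates (the outlier count and outlier-free mean) underdetermine the identities---points the paper leaves implicit.
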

\begin{proof}
Decrypting \textit{Votes} or \textit{Participating} prior to the Prepare phase reveals which processes hold outlier values. Similar to the previous algorithms, processes that have access to these variables have no access to the secret key, and the process that holds the secret key has no access to the variables before the Prepare phase is complete.
\end{proof}

    \begin{algorithm}
    \caption{Privacy-preserving outlier-resistant consensus}\label{alg:outlier_resistant}
    \begin{algorithmic}
    \Function{Main}{ }
    \State $InitialValue \gets v_k$
    \State $\mu \gets Consensus()$
    \State $InitialValue \gets (v_k - \mu)^2$
    \State $\sigma \gets Consensus()$
    \If{$|v_k-c\mu| > \sigma$}
        \State $Outlier \gets True$
    \Else
    \State $Outlier \gets False$
    \EndIf
    \State $\mu \gets OutlierResistantConsensus(Outlier)$
    \State Return $\mu$
    \EndFunction
    \Function{OutlierResistantConsensus}{$Bool$ $Outlier$}
    
    \State $Votes \gets Enc([0,...,v_i,...,0])$
    \State $Counts \gets [0,...,1,...,0]$
    \If{$Outlier$}
    \State $Participating \gets Enc([0,...,1,...,0])$
    \Else
    \State $Participating \gets Enc([0,...,0])$
    \EndIf
        \For{$p_j \in N_i$}
            \State $Send((Votes,Counts,Participating),p_j)$
        \EndFor
    \EndFunction
    
    \Function{Receive}{message m}
    \State $Votes \gets Votes + m.Votes$
    \State $\forall j : Counts_j \gets  Counts_j+ m.Counts_j$
    \State $Participating \gets Participating + m.Participating$
    \For{$p_j \in N_i$}
         \State $Send((Votes,Counts,Participating),p_j)$
    \EndFor
    \If{$0 \notin Counts$}
        \State decide $Prepare(Votes, Counts, Participating)$
    \EndIf
    \EndFunction
    
    \Function{Prepare}{$Votes, Counts, Participating$}
    \State $Counts \gets [\frac{Counts_1^{-1}}{n},\frac{Counts_2^{-1}}{n},...,\frac{Counts_n^{-1}}{n}]$
    \State $Votes \gets mult(Votes,Counts)$
    \State $Participating \gets mult(Participating,Counts)$
    \For{$i=log(n)-1$ to $0$} 
        \State $Votes \gets add(Votes,rotate(Votes,2^i))$
        \State $Temp \gets rotate(Participating,2^i)$
        \State $Participating \gets add(Participating,Temp)$
    \EndFor
    \State return $Votes$
    \EndFunction
    
    \end{algorithmic}
    \end{algorithm}

\section{Privacy-Preserving Leader Election \label{leaderelection}}

In this section we discuss privacy-preserving leader election using a similar approach to the previous sections. Our approach prevents any process from learning the vote of any other process and introduces novel approach for handling tie-breaking while preserving privacy. Specifically, we introduce ranked voting where, if a vote's primary choice is eliminated from the election, the vote counts for a secondary choice. This is achieved in a way that no process learns the primary or secondary votes of other processes. Additionally, no process learns which process is likely to win the election prior to a leader being elected. Specifically, the identities of the processes that are leading in with first votes is kept secret. This ensures that the secondary votes do not depend upon the identities of the leaders after considering the first vote. 

\subsection{Problem Statement}
In classical leader election, each process must decide if it is a leader or not. The problem is solved when exactly one process decides that it is the leader and all other processes know the identify of this leader. We also add two conditions based on privacy.
\begin{itemize}
\item Validity: The process with the most votes win. (This requirement can be fine-tuned. We consider two versions of validity. We also discuss other versions of validity requirement.)
    \begin{itemize}
        \item Each process casts only a primary ballot. A process that receives the maximum votes wins. Tiebreaks are \textit{independent} of process ID to ensure candidate privacy. This ensures that all processes that receive the same number of votes have an equal chance to be the leader. 
        \item Each process casts a primary and secondary ballot. If no process receives a majority with primary ballots alone, the secondary ballots are used to determine the winner. 
    \end{itemize}
    \item Uniqueness: Only a single process is selected as the leader. 
    \item Agreement: All processes agree on which process is the leader.
    \item Termination: Every correct process eventually terminates.
    \item Voter Privacy: No process learns the initial state (the vote) of any other process.
    \item Candidate Privacy: No process is able to learn any information that can help in guessing which process may be elected until a leader is determined.

\end{itemize}
An adversary that tries to make other processes fault may attempt to target a process that is likely to become leader in order to disrupt the computation. Candidate privacy prevents adversaries from knowing which process to target until the computation is complete.

In our work, we rely on a trusted process that provides a homomorphic public/private key, $key_p$ and $key_s$, respectively. All communication during the election is encrypted with $key_p$ (in addition to public key of the receiver). At the end, this message is given to the keyholder to reveal the identify of the leader. 

\begin{figure}[h]
    \centering
    \includegraphics[width=0.4\textwidth]{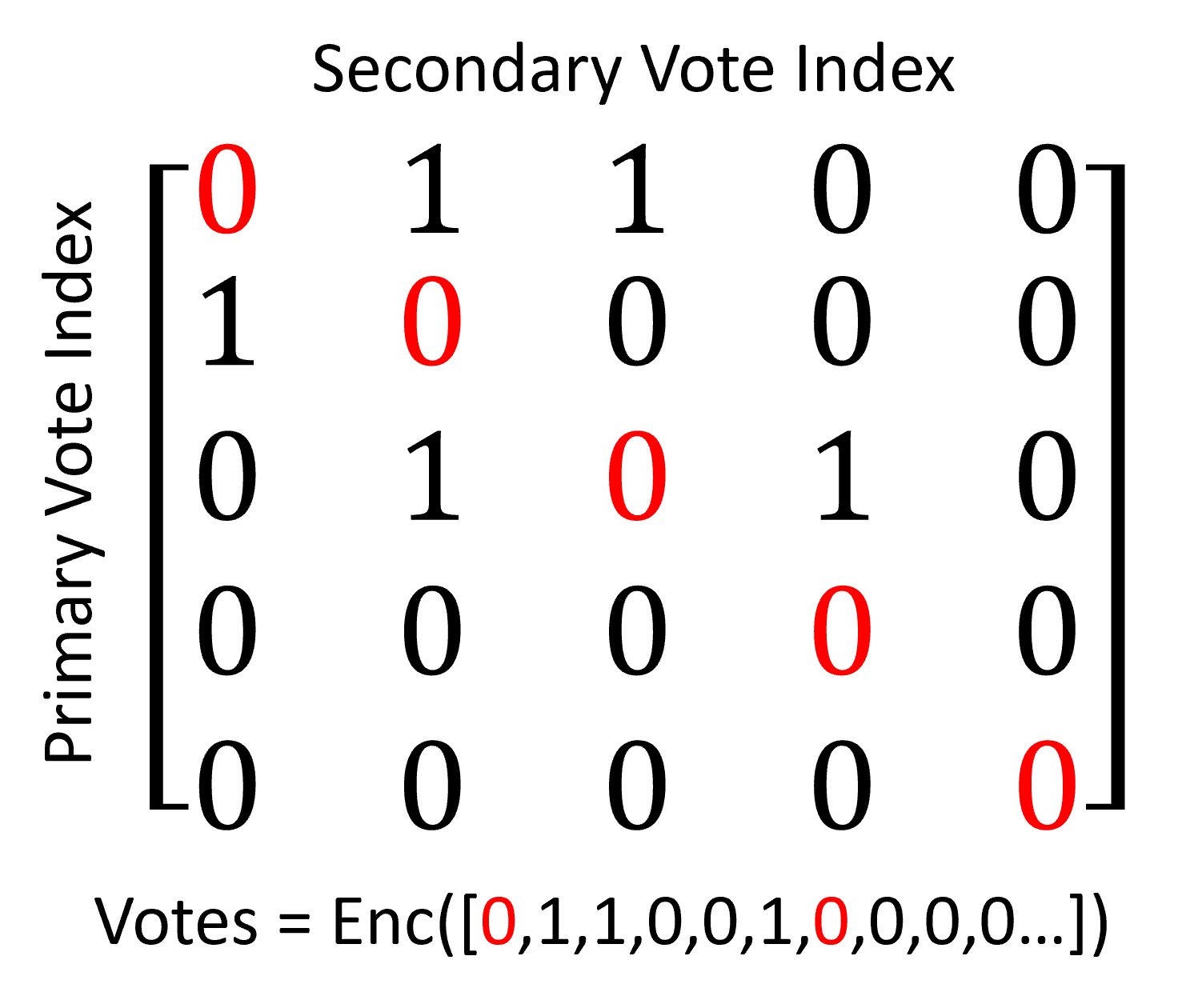}
    \caption{Privacy-preserving ballot-casting with ranked voting. When a process contributes to the \textit{Votes} ciphertext, it selects the index of the process it wishes to vote for (i) and the index of the process it designates as its secondary vote if its primary candidate cannot win the election (j). It then adds 1 to the index (i,j) in the matrix. Indices shown in red must remain zero, as the primary and secondary vote of a process cannot designate the same candidate. In order to encrypt this as a ciphertext, the matrix is flattened by appending each row of the matrix end-to-end. In this example, process 0 wins the election due to the secondary vote of the vote that initially counted towards process 1 transferring to process 0 to break the tie between processes 0 and 2.}
    \label{fig:ballot}
\end{figure}

\subsection{Privacy-Preserving Ballot-Casting}

An approach similar to that of privacy-preserving consensus can be applied to solve leader election. We need to make two changes to Algorithm \ref{alg:trusted}.

The first change is that each process creates its initial \textit{Votes} ciphertext to represent which process it wants to elect leader.
Specifically, if process $i$ wants to vote for process $j$, it creates the vector underlying $Votes$ to be $[0,...,1,...,0]$, where only $j^{th}$ entry is nonzero.
It creates a \textit{Counts} vector to be the same as the \textit{Counts} vector from the previous algorithms in that the $i^{th}$ entry is set to $1$ and all other entries are set to $0$. 
Thus, the \textit{Counts} vector indicates that process $i$ has voted whereas $Votes$ indicates that there is one vote for $j$. 
Similar to Algorithm \ref{alg:trusted}, \textit{Votes} is encrypted and \textit{Counts} is in plaintext. 

The second change is the removal of the Prepare phase. When all processes have contributed, decryption of the ciphertext reveals which process is chosen to be leader. Before this time, no process has any way of knowing which process will be elected. Additionally, no process can learn the vote of any other process, even after decryption. This is because each index of the encrypted vector represents how many votes that process received rather than the initial state of that process. In other words, no additional preparation must be done on a ciphertext after all votes are accrued in order to preserve the privacy of the participating processes. It just needs to be decrypted by the keyholder. 

An important difference in this new algorithm for leader election is the inability to aggregate \textit{Votes} ciphertexts of the same key. Instead, processes must add their vote to each ciphertext they receive, and must only add their vote to a ciphertext a single time. In other words,, a process adds its $Vote$ only if the $Counts$ vector indicates that it has not voted before.

\subsection{Breaking Ties with Ranked Voting}
A problem when performing leader election is what to do in case of a tie. A popular method is to elect the process with a higher ID number, but that reveals that processes with higher IDs are more likely to be elected than lower IDs, violating candidate privacy.

Ranked voting is an ideal solution for tie-breaking, however it is difficult to implement in a privacy-preserving way. Ranked voting calls for each vote to be tracked individually. This may result in a loss of privacy if each vote is its own ciphertext, as processes may figure out which ciphertext originated from which process. Additionally, ciphertexts are large in memory size, causing communication complexity to become very large if every vote were to be a ciphertext.

We introduce "shallow" ranked voting by allowing each process to submit a secondary, vote, but not rank every process. We use a secondary vote for use in the case of a tie. Each process not only casts a vote for their primary candidate, but additionally indicates a candidate that should receive the process's vote in the event that their primary candidate doesn't win the election. We achieve this with a secondary vote matrix. This matrix contains the secondary votes of each process, keeping track of which process was selected as the primary vote and the secondary vote, while keeping private the identity of the process casting the ballot. The way the leader is determined is as follows: If a candidate has the majority of the votes, they are elected. The candidate with the fewest votes is eliminated. Any votes that initially went to this candidate now are transferred the the voters' second choice for leader (using the matrix). This process repeats with the new candidate with the fewest votes until a single process remains. In traditional ranked voting, ballots rank each candidate but that requires $n^n$ vector size. Our approach simply uses a secondary vote, meaning that if a voter's primary and secondary candidates are eliminated, that vote will have no further bearing on the election.

Although only vectors may be encoded and encrypted homomorphically, matrices may be encoded by encoding the vector containing the concatenation of the rows of the matrix. An example of a secondary vote matrix is shown in Figure \ref{fig:ballot}.

\textbf{Breaking a tie. } 
If a tie still exists after the ranked voting, it may be pseudo-randomly broken independent of process ID. Consider $k$ processes tie with some number of votes $v_{tie}$. To break the tie, we compute $v_{tie} \bmod k$, which yields a number between $0$ and $k-1$ (inclusive). Suppose this number is $r$ then we can choose the process with $r^{th}$ ID to be elected as the leader. This allows all processes with highest number of votes to be elected as the leader. 


\subsection{Analysis}
In this section we prove termination, uniqueness, agreement, and privacy for our leader election protocol.

\subsubsection{Termination}
\begin{theorem}
If the communication graph G is connected and no process faults, all participating processes in our leader election algorithm terminate in finite time.
\end{theorem}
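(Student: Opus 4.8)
The plan is to reduce this statement to the termination argument already established for Algorithm \ref{alg:trusted} in Theorem \ref{theorem:term1}, since the leader-election protocol differs from Algorithm \ref{alg:trusted} only in (i) how the initial \textit{Votes} ciphertext is encoded and (ii) the removal of the Prepare phase. Neither change touches the message-passing loop or the stopping rule: a process still forwards its current state to every neighbor whenever it receives new information, and it still decides once its local \textit{Counts} vector has no zero entry. Hence termination again reduces to showing that the \textit{Counts} vector at the deciding process becomes all-nonzero in finite time.

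First I would restate the flooding argument of Theorem \ref{theorem:term1}: because $G$ is connected and no process faults, each broadcast advances every process's contribution at least one hop, so within $\mathrm{diameter}(G)$ hops the contribution of every process has reached every other process. As \textit{Counts} is transmitted in plaintext and its entries only ever turn from $0$ to nonzero, the vector underlying \textit{Counts} is monotonically non-decreasing and bounded above by full support; it therefore stabilizes, and connectivity forces the stable value to have a nonzero entry in every index.

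The step specific to leader election, and the one I expect to be the main obstacle, is reconciling this with the new rule that \textit{Votes} ciphertexts of the same key cannot be aggregated: a process adds its vote to a received ciphertext at most once (detected through \textit{Counts}, as in the footnote of Algorithm \ref{alg:trusted}), and two distinct ciphertexts are never merged. In Algorithm \ref{alg:trusted} the filling of \textit{Counts} was aided by summing ciphertexts together, a move now forbidden, so I cannot simply appeal to merged partial tallies. Instead I would track a single ballot ciphertext initiated by one process and follow it as it floods outward along shortest paths: each newly reached process stamps its vote exactly once and forwards the result, so after $\mathrm{diameter}(G)$ hops this one ciphertext has been stamped by every process in $\Pi$ and its \textit{Counts} has full support. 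The invariant to maintain is that the set of processes that have contributed to a fixed ballot only grows and, under connectivity, eventually equals $\Pi$; the ``add once'' rule merely suppresses redundant re-stamping on alternate paths and the ``no merge'' rule only forbids combining ciphertexts, so neither can prevent some ciphertext from visiting all nodes.

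Finally I would conclude that once such a ciphertext reaches a process (or the keyholder) with full \textit{Counts}, the termination condition $0 \notin Counts$ fires and that process decides; since this occurs within $\mathrm{diameter}(G)$ hops, all participating processes terminate in finite time.
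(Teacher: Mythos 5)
You correctly reduce termination to the filling of the plaintext \textit{Counts} vector, and you are right to flag the ``no merge, stamp once'' rule as the point where the reduction to Theorem \ref{theorem:term1} needs extra care --- the paper's own proof is a one-line appeal to the identical termination condition and never engages with this subtlety. However, your resolution of that subtlety is where the argument breaks. You claim that a single ballot, flooding outward along shortest paths, ``has been stamped by every process in $\Pi$'' after $\mathrm{diameter}(G)$ hops. That is exactly what the no-merge rule forbids: when a ballot fans out along different shortest paths, each copy carries only the stamps of the processes on its own path, and those copies can never be recombined. Concretely, take a star with center $c$ and leaves $l_1,\dots,l_k$ (so $n=k+1$ and $\mathrm{diameter}(G)=2$): a ballot initiated by $l_1$ reaches every process within two hops, but at that point every copy in the system carries exactly three stamps ($l_1$, $c$, and one other leaf), so for $n\ge 4$ no copy satisfies $0\notin Counts$ after $\mathrm{diameter}(G)$ hops.

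The theorem itself survives, because full coverage of some copy is obtained not by shortest-path flooding but by a walk that visits all vertices: any copy with an incomplete stamp set that is forwarded to all neighbors spawns, within at most $n-1$ further hops, a descendant copy with strictly more stamps (route it toward a nearest unstamped process, which exists by connectivity), so by induction some copy acquires all $n$ stamps within $O(n^2)$ hops, and a spanning-tree traversal argument improves this to $O(n)$. If you replace your shortest-path/diameter step with this monotonicity-plus-covering-walk argument, your invariant that ``the set of processes that have contributed to a fixed ballot only grows and eventually equals $\Pi$'' becomes justified and the proof goes through; as written, the key quantitative step is false, and with it the stated mechanism by which termination is supposed to occur.
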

\begin{proof}
This algorithm has the same termination condition as \ref{alg:trusted} and thus also terminates by Theorem \ref{theorem:term1}.
\end{proof}

\subsubsection{Uniqueness and Agreement}
\begin{theorem}
If all processes follow our leader election protocol, exactly one process will be selected as the leader and every process will agree which process is the leader.
\end{theorem}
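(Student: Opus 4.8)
The plan is to establish the two claims---uniqueness and agreement---by tracing what the trusted keyholder decrypts at the end of the protocol and arguing that the tie-breaking rule resolves to exactly one process. First I would note that, by Theorem~\ref{theorem:term1} (reused for this protocol), every process eventually contributes its ballot exactly once to every \textit{Votes} ciphertext it handles, so when the keyholder decrypts, it obtains a well-defined integer vector whose $j^{\text{th}}$ entry equals the total number of first-choice votes for process $j$, together with the flattened secondary-vote matrix. The key observation is that all of this final tallying and tie-breaking is a \emph{deterministic function of a single decrypted plaintext} held by one process (the keyholder). Since the keyholder alone runs the elimination-and-transfer procedure and then broadcasts the resulting leader identity, there is only one computation being performed on one piece of data, which is what will ultimately force both uniqueness and agreement.

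For uniqueness, I would argue that the ranked-voting elimination procedure described in the text terminates with a single candidate or with a set of tied candidates, and that the final modular tie-break $v_{tie} \bmod k$ selects exactly one index $r \in \{0,\dots,k-1\}$. The elimination loop removes the candidate with the fewest votes at each step and transfers its votes via the secondary-vote matrix, strictly decreasing the number of remaining candidates by at least one per round; hence after finitely many rounds either one candidate holds a strict majority or a tie among $k$ candidates persists with no further eliminations possible. In the former case the winner is unique by definition of strict majority; in the latter, $v_{tie} \bmod k$ is a single well-defined residue, so exactly one of the $k$ tied processes is chosen. I would emphasize that because the tie-break depends only on $v_{tie}$ and $k$ and not on process ID, it still selects one unique process.

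For agreement, I would appeal to the fact that the leader is determined by the keyholder from the single decrypted tally and then disseminated to all processes. Because every process's final \textit{Votes} ciphertext aggregates the same set of ballots (each process votes exactly once, enforced by the \textit{Counts} check), the decrypted tally is identical regardless of which message the keyholder decrypts, and the deterministic elimination/tie-break procedure yields the same leader. Thus all processes that receive the broadcast agree on the identity of the leader.

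The main obstacle I anticipate is precisely the \emph{well-definedness} of the final tally---namely, showing that despite the asynchronous, broadcast-based aggregation, the keyholder decrypts the correct count in which each ballot is represented exactly once. This hinges on the ``add my vote only if \textit{Counts} says I haven't voted'' rule, and I would need to argue carefully that this rule prevents both double-counting and omission across the many ciphertexts circulating in the network, analogous to (but distinct from) the \textit{Counts}-based de-duplication in Algorithm~\ref{alg:trusted}. A secondary subtlety is confirming that the elimination procedure is itself deterministic and order-independent (in particular, how ties in the ``fewest votes'' step are handled during elimination), so that uniqueness is not threatened by an ambiguous intermediate step before the final modular tie-break is reached.
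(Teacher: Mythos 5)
Your proposal takes essentially the same approach as the paper's proof: the leader is determined deterministically by the keyholder from the single decrypted \textit{Votes} tally, the tie-breaking rule guarantees exactly one winner, and the keyholder's broadcast of the result yields agreement. The subtleties you flag at the end (well-definedness of the tally under asynchronous aggregation, and order-independence of the elimination step) are real but are glossed over by the paper's own proof as well, which is considerably terser than your write-up.
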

\begin{proof}
The leader is determined upon the decrypting of the \textit{Votes} ciphertext. Our protocol breaks any ties that may result from multiple processes receiving the same number of votes. Therefore, only one process will win the election. The keyholder then communicates this information to every other process. This ensures agreement as well.
\end{proof}

\subsubsection{Privacy}
\begin{theorem}
If all processes follow our leader election protocol, no process is able to learn the initial vote of any other process.
\end{theorem}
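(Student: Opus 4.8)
The plan is to mirror the privacy argument used for Algorithm~\ref{alg:trusted} and adapt it to the leader-election setting, where the key observation is that decryption of the final \textit{Votes} ciphertext reveals only an aggregate vote tally, never an individual ballot. First I would characterize precisely what information each process has access to during the protocol. Every process handles only \textit{Votes} ciphertexts (encrypted under \publickey) and the plaintext \textit{Counts} vectors. Since no ordinary process holds \secretkey, none can decrypt any intermediate ciphertext, so the only candidate for a privacy breach is the keyholder, who alone can decrypt.

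Next I would analyze what the keyholder learns upon decryption. Here the crucial structural fact, carried over from the ballot-casting construction, is that each index of the decrypted vector represents the \emph{number of votes a candidate received}, not the ballot cast by any particular voter. Formally, when process $i$ votes for process $j$ it contributes the indicator vector with a $1$ in position $j$, and the protocol homomorphically sums these contributions (each counted exactly once, as enforced by the \textit{Counts} check). Thus the decrypted aggregate is $\big[\sum_i \mathbf{1}[i \text{ votes } 0],\ \sum_i \mathbf{1}[i \text{ votes } 1],\ \ldots\big]$, which is symmetric in the voters and therefore discards all information about who cast which ballot. I would argue that this tally is exactly the information the problem statement permits to be public (the final result), and that no finer-grained information is ever exposed.

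I would then dispatch the two remaining leakage channels. First, the \textit{Counts} vector is plaintext, so I must confirm it reveals only \emph{that} process $i$ has voted, never \emph{for whom} --- its $i^{th}$ entry being nonzero says nothing about the corresponding entry of \textit{Votes}, since the two vectors are indexed independently. Second, I would invoke the remark from Section~\ref{background} that CKKS is semantically secure and does not admit small-domain attacks: even though each ballot is a sparse $0/1$ vector drawn from a small set, encrypting the same vector yields distinct ciphertexts, so an adversary cannot match an observed ciphertext against a precomputed dictionary of encrypted ballots. Combining these, no process other than the keyholder learns anything (lacking \secretkey), and the keyholder learns only the symmetric aggregate, so no process learns any individual vote.

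The main obstacle I anticipate is not the keyholder --- whose view is cleanly limited to the aggregate --- but rather ruling out a \emph{differencing attack} among ordinary processes. Because each process must add its ballot to every ciphertext it receives and may see several partially-aggregated ciphertexts from different neighbors, one must verify that comparing two ciphertexts whose \textit{Counts} differ in a single index cannot, together with any auxiliary decryption, isolate one voter's contribution. The defense is that isolating a difference still requires decryption, which only the keyholder can perform, and the keyholder is deliberately kept off the message path until the fully-aggregated result is ready; hence even a colluding view of many intermediate ciphertexts remains computationally opaque. I would make this precise by appealing to the semantic security of CKKS to conclude that the collection of intermediate ciphertexts is indistinguishable from encryptions of arbitrary vectors, so no information about individual votes leaks to any non-keyholder regardless of how many intermediate states it observes.
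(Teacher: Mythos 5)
Your proposal is correct and takes essentially the same approach as the paper's proof: the decrypted \textit{Votes} ciphertext reveals only the per-candidate tally, which is symmetric in the voters, and no decryption occurs until all processes have contributed. Your added treatment of the plaintext \textit{Counts} channel, CKKS semantic security against small-domain dictionary attacks, and differencing over intermediate ciphertexts is more thorough than the paper's brief argument, which leaves those leakage channels implicit.
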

\begin{proof}
The \textit{Votes} ciphertext, when decrypted, does not reveal which processes contributed which votes, only how many votes each process received. If the ciphertext isn't decrypted until all processes have contributed, no process may determine which process contributed which vote. This holds true in the algorithm and thus the theorem is proved.
\end{proof}

\section{Related Work \label{related}}
Consensus is a fundamental problem in distributed computation and as such has seen many works dedicated to solving it and its variations. Early work focused on solving consensus in multiple contexts such as in the presence of faults \cite{fischer1983consensus}. In a fully asynchronous setting, the presence of a even single undetectable fault was shown to produce the possibility of nontermination \cite{fischer1985impossibility}, leading to consensus algorithms assuming some level of faults being detectable \cite{chandra1996unreliable} or focuses on reducing synchrony needed \cite{dwork1988consensus} to achieve consensus.

Much recent work in the field has been dedicated to consensus in the context of blockchain \cite{xiao2020survey}. Other work has been focused on optimization of convex problems as a constraint of consensus \cite{shi2017distributed} \cite{qiu2016distributed}. Still, other work focuses on mitigating attacks during consensus \cite{lu2018distributed} \cite{feng2016distributed}.

Leader election is a fundamental problem in distributed systems from the early work in  \cite{le1977distributed}. Early papers focused on topics such as mobile ad hoc networks \cite{malpani2000leader}, link failure tolerance \cite{singh1996leader}, and broadcast networks \cite{brunekreef1996design}.
Methods of stable leader election and self-stabilizing leader election have been the topic of recent research \cite{chen2019self} \cite{doty2018stable} \cite{sudo2020loosely} \cite{altisen2017self}. Additional topics include space optimal solutions \cite{gkasieniec2018fast} \cite{berenbrink2020optimal} and applications into internet of things \cite{wu2021privacy} \cite{rahman2019leader}.

Privacy-related concerns have driven research on privacy-preserving solutions in distributed systems in recent years. Privacy-preserving maximum consensus has been solved by generating and transmitting random numbers before transmitting initial states to hide the actual initial state \cite{duan2015privacy} \cite{bouchra2019robust}. These approaches must spend time transmitting random values prior to communication that includes useful information, incurring a drawback of communication complexity and time spent transmitting false values.

Homomorphic encryption (HE) is popular for its ability to provide strong privacy guarantees in distributed systems. The problem of consensus has been solved using HE \cite{lazzeretti2014secure} \cite{ruan2017secure} \cite{hadjicostis2018privary}. These approaches focus on gossip communication and short-term use of HE, i.e. processes encrypt data, communicate with their neighbor which modifies the ciphertext before sending it back for decryption. Our work is more general in that it allows computation over an arbitrary network to compute consensus with removal of outliers as well as ranked leader election. 

Blockchain has seen privacy improvement through HE via secret leader election \cite{freitas2022homomorphic} and trustworthy random number generation \cite{nguyen2019scalable}. These algorithms, although designed for use with blockchain, are generic to distributed computations.

\section{Conclusion \label{conclusion}}

Consensus and leader election are two fundamental problems in distributed computing. While there are several algorithms for solving these problems \cite{DBLP:conf/opodis/Lamport02,DBLP:journals/tc/Garcia-Molina82}, they assume that the votes/preferences of each process can be known to others. This is undesirable where each participant wants to preserve the privacy of their own vote. 

We focused on the problem of average-consensus, where the goal to compute the average votes of individual processes. The goal is to ensure that the average is computed while keeping the original votes secret. We considered two variations of the consensus problem, one where there is a trusted party that wants to compute the average and one where such a party does not exist. 

In the first variation, we only assume that the trusted party creates the necessary homomorphic public/private keys and provides the public key to everyone. This party is not trusted with any of the votes and we ensure that the actual votes are kept private from this party as well as other participants. A typical use of this model would be one where an entity, e.g., government, wants to collect statistical data on vitally important characteristics but wants to overcome resistance from entities, e.g., private companies, from sharing data with others. A typical instance of this one where the government may want to collect information about the number of average security attacks but each company wants to keep the number of security attacks on them private. 

Our algorithm does not consider collusion between a user and trusted entity. This assumption is reasonable since a user, say $i$, must send their votes to some other user $j$. If $j$ is colluding with the trusted entity, privacy of $i$ could be violated. 
However, it would be possible address some aspect of collusion.  To achieve this, we observe that our algorithm can be easily extended to cases where the set of users are partitioned into groups where one user can be part of multiple groups. Here, the algorithm without the trusted process can be used in each group to compute the average (or sum) of all votes. This cumulative information can then be exchanged with other groups. Since our algorithm already permits a process to add its vote several times without affecting the average, this approach is feasible even if a user is part of multiple groups. With this approach, privacy of process $i$ will be preserved even if processes in other groups (i.e., groups where $i$ is not a part) collude with the trusted entity.

In our second variation, we considered the case where a group of processes want to compute the average-consensus among themselves. 
A typical example of this would be a group of employees anonymously submitting ratings of their workplace between one another in order to accurately gauge sentiment. 


We also addressed the problem of average-consensus where outliers are omitted from calculation while still preserving privacy. Eliminating outliers in analyzing the data is often necessary so that single outlier data items do not corrupt the conclusion. Our solution guarantees that none can learn about which processes are the outliers. 

As part of solving the problem of average consensus while eliminating outliers, we also compute the standard deviation of the original votes. This can be of use in various other applications as well. For instance, it may be desirable to learn the average number of security breaches each major tech company faced in the previous year, but companies may not desire to share this information. Additionally, if a single company has faced many more breaches than the others due to a security flaw, a more accurate picture may be formed by excluding the outlier. In this case, that company may be unwilling to share that they themselves are an outlier because it could imply a high number of security breaches.

We also developed a privacy-preserving algorithm for leader election. Here, the goal was not only to ensure that votes are kept private but to ensure that the identities of potential leader candidates is also private. This candidate privacy guarantees that none is able to discern any information during the voting process. Candidate privacy ensures that the early voters are not able to affect votes of late voters. It also prevents \textit{strategic} voting where a person chooses to vote for a less preferred candidate because it can cause their preferred candidate to win. Our algorithm also does not do tie-breaking on ID, as tie-breaking on ID causes some processes (e.g., processes with higher ID) to have a higher chance of being elected than others. 

\bibliographystyle{plain} 
\bibliography{main}

\end{document}